\newcommand{\ex}[1]{\mathbb{E} \left[ #1 \right]}
\newtheorem{definition}{Definition}
\newtheorem{assumption}{Assumption}
\newtheorem{theorem}{Theorem}
\newtheorem{lemma}[theorem]{Lemma}
\newtheorem{remark}{Remark}
\renewcommand{\leq}{\leqslant}
\renewcommand{\geq}{\geqslant}
\title{\LARGE \bf 
Local Stochastic Factored Gradient Descent\\ for Distributed 
Quantum State Tomography}
\author{Junhyung Lyle Kim$^{\star}$, Mohammad Taha Toghani$^{\ddagger}$, C\'{e}sar A. Uribe$^\ddagger$, Anastasios Kyrillidis$^{\star}$
\thanks{$^\star$Department of Computer Science, Rice University, Houston, TX, USA. \{\href{mailto:jlylekim@rice.edu}{jlylekim},~\href{mailto:anastasios@rice.edu}{anastasios}\}@rice.edu}
\thanks{$^\ddagger$Department of Electrical and Computer Engineering, Rice University, Houston, TX, USA. \{\href{mailto:mttoghani@rice.edu}{mttoghani},~\href{mailto:cauribe@rice.edu}{cauribe}\}@rice.edu}
}
\begin{document}
\maketitle

\begin{abstract}
We propose a distributed Quantum State Tomography (QST) protocol, named  Local Stochastic Factored Gradient Descent (Local SFGD), to learn the low-rank factor of a density matrix over a set of local machines. QST is the canonical procedure to characterize the state of a quantum system, which we formulate as a stochastic nonconvex smooth optimization problem. 
Physically, the estimation of a low-rank density matrix helps characterizing the amount of noise introduced by quantum computation. Theoretically, we prove the local convergence of Local SFGD for a general class of restricted strongly convex/smooth loss functions. 
Local SFGD converges locally to a small neighborhood of the global optimum at a linear rate with a constant step size, while it locally converges exactly at a sub-linear rate with diminishing step sizes. With a proper initialization, local convergence results imply global convergence. We validate our theoretical findings with numerical simulations of QST on the Greenberger-Horne-Zeilinger (GHZ) state.
\end{abstract}

\section{Introduction}

A fully-functional fault-tolerant quantum computer faces many technical hurdles. For instance, using superconducting materials technology, quantum computers must remain cooled at a very low temperature---almost absolute zero---to preserve coherence \cite{park2017second}. Moreover, environmental noise from the electronics controlling the quantum system can disrupt the coherence of its qubits.
Thus, the behavior of current quantum computer implementations needs to be characterized, verified, and rigorously certified, before their widespread commercial use~\cite{preskill2018quantum}. 

Quantum State Tomography (QST) is the canonical procedure to characterize the state of a quantum system at various steps of 
a given computation~\cite{altepeter2005photonic}. 
In particular, experimental quantum physicists design quantum circuits that in theory lead to a specific target pure 
state; then, they 
compare the prepared (input) state and the reconstructed (output) state. 
To do so, measurements are taken on independently prepared copies of the state of a quantum system, and then used to estimate the unknown state by post-processing the data~\cite{altepeter2005photonic}.
However, the description complexity of a quantum state grows \emph{exponentially with the number of qubits}, leading to challenging data acquisition, processing, and storage.
Therefore, as the number of qubits and quantum gates increases, so does the need for efficient, robust, and experimentally-accessible protocols to benchmark quantum information processors.

A quantum state can be represented by a density matrix $\rho$ which is a complex, positive semi-definite (PSD) matrix with unit trace. 
The goal of QST is to design protocols that estimate 
$\rho$. For an $n$-qubit mixed state\footnote{Mixed states are the most general way to express a quantum state.} $\Psi \in \mathbb{C}^{2^n}$, 
its density matrix can be written as a mixture of $r$ pure states: {$\rho = \sum_{k}^r p_k \Psi_k \Psi_k^\dagger \in \mathbb{C}^{2^n \times 2^n},$} where $(\cdot)^\dagger$ denotes the complex conjugate operator. 
Here, {$p_k$} is the probability of finding $\rho$ in the pure state {$\Psi_k$}. 
Given these definitions, QST can be formulated as the estimation of a low-rank density matrix $\rho^\star \in \mathbb{C}^{d \times d}$ on an $n$-qubit Hilbert space with dimension $d=2^n$, through the following $\ell_2$-norm optimization problem: 
\begin{equation}\label{eq:obj-qst}
\begin{aligned}
& \min_{\rho \in \mathbb{C}^{d \times d}}
& & F(\rho) := \tfrac{1}{2m} \|\mathcal{A}(\rho) - y\|_2^2 \\
& \text{subject to}
& & \rho \succeq 0, ~\text{rank}(\rho) \leq r,
\end{aligned}
\end{equation}
where $\mathcal{A}: \mathbb{C}^{2^n \times 2^n} \rightarrow \mathbb{R}^m$ is the linear sensing map such that {$\mathcal{A}(\rho)_k = \text{Tr} (A_k \rho )$, for $k = 1, \dots, m$}. The sensing map used in QST has a particular structure: it is the Kronecker product of Pauli matrices {$A_k$},
and is closely related to how quantum computers take measurements in practice~\cite{liu2011universal}. 


The exponential dependency on the number of qubits implies that $\rho$ has more than a trillion entries for a $20$-qubit system. Storing this matrix demands tens of terabytes of memory, which is only available as distributed memory in sizable clusters. 
{
Analogously, a quantum system with more than 30 qubits would demand $100\times$ more memory than the one present in the world’s fastest supercomputer.}



To alleviate these challenges, 
we study the following distributed optimization problem to be jointly solved over a set of $M$ machines:
\vspace{-2mm}
\begin{equation}\label{eq:str-cvx-obj}
\begin{aligned}
& \min_{X \in \mathbb{R}^{d \times d}} 
& & \bigg\{ f(X) = \frac{1}{M} \sum_{i=1}^M f_i(X) \bigg\} \\
& \text{subject to}
& & X \succeq 0, {~\text{rank}(X) \leq r},
\end{aligned}
\end{equation}
In~\eqref{eq:str-cvx-obj}, \mbox{$f_i(X) := \mathbb{E}_{j \sim \mathcal{D}_i} \big[ f_i^j(X) \big]$}, and $f_i^j(X)$ is the loss function evaluated at the $j$-th observation of the locally stored dataset of machine $i$,  which follows the distribution $\mathcal{D}_i.$\footnote{We assume the homogeneous data case where $\mathcal{D}_i = \mathcal{D}$ for all $i$.}
The function $f_i: \mathbb{R}^{d \times d} \rightarrow \mathbb{R}$ is a (restricted) strongly convex/smooth differentiable function, and $X \succeq 0$ is the set of positive semi-definite matrices {with $\text{rank}(X)\leq r$}.\footnote{We provide theory for the real case; extensions to complex domains can be obtained with complex conversions and Wirtinger derivatives \cite{gunning2009analytic}.} 





To solve \eqref{eq:str-cvx-obj}, we introduce the \textit{Local Stochastic Factored Gradient Descent} (Local SFGD) algorithm, and prove its convergence. To the best of our knowledge, this is the first work that studies Local SGD in the non-convex factorized objective, and 
provides convergence in terms of the distance to the optimal model parameter.
Our contributions can be summarized as:
\begin{itemize}[leftmargin=*]
    \item We introduce a distributed problem setup for 
    QST as an instance of~\eqref{eq:str-cvx-obj}.
    \item We propose Local SFGD, a distributed algorithm that uses matrix factorization and utilizes local stochastic gradient steps for the minimization of a non-convex function. 
    \item We provide local convergence guarantees for Local SFGD  for restricted strongly convex/smooth losses, which is of independent interest, and subsumes the QST problem as a special case.
    \item We corroborate our theoretical findings with numerical simulations of QST for the Greenberger-Horne-Zeilinger (GHZ) state.
\end{itemize}
{
This paper is organized as follows. Section~\ref{sec:prelim} reviews the QST protocol, and sets up the non-convex distributed objective. Section~\ref{sec:algo} introduces the Local SFGD algorithm, followed by Section~\ref{sec:main-results} where we provide the main theoretical results along with the proofs. Lastly, in Section~\ref{sec:experiments}, we use Local SFGD for the reconstruction of the GHZ state.
}

\section{Preliminaries}
\label{sec:prelim}

Classically, the sample complexity $m$ for reconstructing $\rho^\star \in \mathbb{C}^{d \times d}$ is $O(d^2),$ where $d$ itself grows exponentially with $n$.
%
To address such large sample complexity requirements, we use low-rankness as prior, as many lab-constructed density matrices have low-rank structure, including the maximally-entangled Greenberger-Horne-Zeilinger (GHZ) state \cite{zhao2021creation}.
While the low-rank constraint is non-convex, it provides a significant reduction in the sample complexity. Under appropriate assumptions,
a rank-$r$ density matrix can be reconstructed with $m = O(r \cdot d \cdot \text{poly} \log(d))$ measurements, 
instead of \mbox{$m = O(d^2)$}~\cite{gross2010quantum}.

{
We propose to solve a factorized version of \eqref{eq:obj-qst} to efficiently handle its low-rank constraint, following \cite{kyrillidis2018provable, kim_2021_fast}:}
\begin{equation}\label{eq:factobj-qst}
\min_{U \in \mathbb{C}^{d \times r}} ~ G(U) := F(UU^\dagger) = \tfrac{1}{2m}  \|\mathcal{A}(UU^\dagger) - y\|_2^2.
\end{equation}
%
{
In 
\eqref{eq:factobj-qst}, we parametrize the low-rank density matrix $\rho$ by its factor $U \in \mathbb{C}^{d \times r}$}.
By rewriting $\rho = UU^\dagger$, 
both the {PSD and the low-rank constraints}
are automatically satisfied, leading to the unconstrained non-convex formulation in \eqref{eq:factobj-qst}. 
Moreover, working in the factored space improves time and space complexities \cite{kyrillidis2018provable, kim_2021_fast, tu2016low}.
%
However, even with the reduced sample complexity $m = O(r \cdot d \cdot \text{poly} \log(d))$, 
linear dependency on $d = 2^n$ makes computation infeasible, e.g., for $n=20$ and rank $r=100$, the reduced sample complexity still reaches $2.02 \times 10^{10}$. 

To handle this explosion of data, we consider the setting where the measurements $y \in \mathbb{R}^m$ and the sensing matrices $\mathcal{A}: \mathbb{C}^{d\times d} \rightarrow \mathbb{R}^m$ from a central quantum computer \emph{are locally stored across $M$ different classical machines}. These classical machines perform some local operations based on their local data, and communicate back and forth with the central quantum server to reconstruct a density matrix.

The \emph{distributed} QST problem can be written as: 
\begin{equation}\label{eq:dist-qst-obj-1}
\begin{split}
    \min_{U \in \mathbb{C}^{d \times r}}  \Big\{ g(U) &= \frac{1}{M} \sum_{i=1}^M g_i(U) \Big\},    \\
  \text{where} \quad g_i(U) &:= \mathbb{E}_{j \sim \mathcal{D}_i} \|\mathcal{A}_i^j (UU^\dagger) - y_i^j \|_2^2, 
\end{split}
\end{equation}
with $j$ being a random variable that follows a distribution $\mathcal{D}_i$ for machine $i$. 
{
In the next section, we introduce our approach to solve \eqref{eq:dist-qst-obj-1}, which can be more generally applied to \eqref{eq:str-cvx-obj}.}

\section{Algorithms}
\label{sec:algo}
We now introduce the 
\emph{Local Stochastic Factored Gradient Descent} (Local SFGD) 
algorithm.
We review the \emph{Factored Gradient Descent} (FGD) algorithm \cite{kyrillidis2018provable, kim_2021_fast, tu2016low}  
and its stochastic variant \cite{zeng_global_2019}, on which {the} Local SFGD is based.

\noindent
\textbf{$\diamond$ Factored Gradient Descent (FGD)}.
A common approach to solve the factorized non-convex objective in \eqref{eq:factobj-qst} in \emph{centralized} settings is to use gradient descent on the factor $U$:
\begin{align}
  U_{t+1} &= U_t - \eta_t \nabla G(U_t) = U_t - \eta_t \nabla F(U_t U_t^\dagger) \cdot U_t \nonumber \\
  &= {U_t - \frac{\eta_t}{m} \Big( \sum_{k=1}^m \left\{ \text{Tr}(A_k U_t U_t^\dagger) - y_k \right\} A_k \Big) \cdot U_t,} \label{eq:fgd}
\end{align}
where $\eta_t>0$ is the step size.
From \eqref{eq:fgd}, we can see that a pass over full data is required to compute the gradient
on every iteration. This can be computationally challenging or even infeasible when $m$ is large, which is almost always the case for QST, even for moderate number of qubits $n$.

\noindent
\textbf{$\diamond$ Stochastic Factored Gradient Descent (SFGD)}.
A simple and effective way to mitigate this burden is to use \emph{Stochastic Factored Gradient Descent} (SFGD), which replaces the true gradient $\nabla G$ with an unbiased estimator $H$.
For instance, one can use the following SFGD update:
\begin{align}
  U_{t+1} 
  &= U_t - \eta_t \cdot H(U_t) \nonumber \\
  &= {U_t - \frac{\eta_t}{b} \Big( \sum_{k=1}^b  \left\{ \text{Tr}(A_k U_t U_t^\dagger) - y_k \right\} A_k \Big) \cdot U_t} ,\label{eq:sfgd}
\end{align}
which simply uses $b$ measurements instead of $m \gg b$ to approximate $\nabla G$, where the hyperparameter $b$ is the batch size. 
In \cite{zeng_global_2019}, the convergence of SFGD was shown for (restricted) strongly convex/smooth functions. 

From \eqref{eq:sfgd}, one can see that SFGD is amenable to parallelization, simply by replacing $H(U_t)$ with an average of stochastic gradients that are computed independently from the local machines.
This scheme is often the state-of-the-art in distributed learning problems \cite{abadi2016tensorflow, goyal2017accurate}.
However, it exhibits a major drawback: on every iteration, each machine has to send the local (stochastic) gradient to the server, and receive back the aggregated model parameter. {Such communication is much more expensive than---typically about 3 orders of magnitude---the local computations that each machine has to perform \cite{lan2020communication}.}

\noindent
\textbf{$\diamond$ Our approach: Local SFGD}.
There are two main approaches to resolve the aforementioned communication overhead.
One is to reduce the number of transmitted bits via gradient compression schemes, such as quantization \cite{alistarh2017qsgd} or sparsification \cite{hoefler2021sparsity}.
The other is to increase the amount of local iterations performed on each machine, in order to reduce the total communication rounds. The latter approach is called Local SGD, and was shown to outperform (parallel) SGD in some settings~\cite{stich_local_2019, haddadpour2019local,woodworth2020local, khaled_tighter_2020}.







In this work, we introduce the Local SFGD 
to estimate the low-rank factor of a density matrix over a set of local machines.
Although our main application is to solve the distributed QST objective in \eqref{eq:dist-qst-obj-1}, {Local SFGD is more generally applicable to the distributed objective in \eqref{eq:str-cvx-obj}}; see Section~\ref{sec:main-results} for details.
Local SFGD is summarized in Algorithm~\ref{alg:local-sfgd}. While there are non-convex results on Local SGD \cite{wang2018cooperative, zhou2017convergence}, they consider a different problem setting,
and only provide convergence in terms of the norm of the gradient.
To the best of our knowledge, this is the first work that studies Local SGD in the non-convex factorized objective, and 
provides convergence in terms of the distance to the optimal model parameters.



%
%
%

Local SFGD produces $M$ sequences in parallel, where $M$ is the number of machines. If a synchronization step happens at time $t$, i.e., $t = t_p$ for some $p \in \mathbb{N},$ then the local parameters at each machine $U_t^i$ are sent to the central server, and their average is computed (line 6). Otherwise, each machine performs (possibly many iterations of) SFGD without communicating with the central server (line 8). 
An important metric to consider for Local SFGD is the maximum time interval between two synchronization time steps: $\max_p |t_p - t_{p+1}|,$ which we assume is bounded by $h \geq 1$; see also Theorems~\ref{thm:lin-conv} and \ref{thm:sub-lin-conv}.
If communication happens on every iteration, i.e.,
$h=1,$
then Algorithm~\ref{alg:local-sfgd} reduces to the (parallel) SFGD in \eqref{eq:sfgd}. 



\begin{algorithm}[tb!]
	\caption{Local SFGD} \label{alg:local-sfgd}
	\begin{algorithmic}[1]
	    \State Set number of iterations $T>0$, synchronization time steps $t_1, t_2, \dots$, and initialize $U_0^i = U_0$ as below: 
	    \vspace{-2mm}
	    \begin{equation}\label{eq:initialization}
            U_0^i = \texttt{SVD} \Big( - \sum_{i=1}^M \tfrac{m_i}{m} \nabla f_i(0) \Big) \quad \forall i \in [M],
        \vspace{-2mm}
        \end{equation}
        where \texttt{SVD} denotes the singular value decomposition.
 		\For {each round $t=0, \ldots T$}
		  \For {in parallel for $i \in [M]$}
		    \State Sample $j_t$ uniformly at random from $[m_i]$.
            \If {$t = t_p$ for some $p \in \mathbb{N}$} 
              \State 
                $U_{t+1}^i = \frac{1}{M} \sum_{i=1}^M \big( U_t^i - \eta_t \nabla g_i^{j_t} (U_t^i) \big)$
            \Else 
             \State 
             $U_{t+1}^i = U_t^i - \eta_t \nabla g_i^{j_t}(U_t^i)$
            \EndIf
          \EndFor
		\EndFor\\
		\Return {$\hat{U}_{T+1} := \frac{1}{M} \sum_{i=1}^M U_{T+1}^i.$}
	\end{algorithmic} 
\end{algorithm}

\section{The Convergence of Local SFGD}
\label{sec:main-results}



We provide local convergence guarantees of Local SFGD in Algorithm~\ref{alg:local-sfgd} for \emph{restricted $\mu$-strongly convex/$L$-smooth objectives}.
Similarly to \eqref{eq:factobj-qst}, as we parametrize $X = UU^\top,$ Problem \eqref{eq:str-cvx-obj} becomes non-convex, where:
\begin{equation} \label{eq:fact-obj}
\min_{U \in \mathbb{R}^{d \times r}}  \Big\{ g(U) = \frac{1}{M} \sum_{i=1}^M g_i(U)  \Big\},
\end{equation}
which now is unconstrained, as {both the PSD and the low-rank constraints are automatically satisfied.}

We assume $f_i$ is a symmetric function: $f_i(X) = f_i(X^\top).$ Then, the gradient of $g_i(U) = f_i(UU^\top)$ simplifies to:\footnote{Without loss of generality, we absorb 2 into $\eta_t$ to use $\nabla g_i(U)$ and $\nabla f_i(UU^\top)U$ interchangeably.}
\begin{small}
\begin{align*}
    \nabla g_i(U) = \left( \nabla f_i(UU^\top) + \nabla f_i(UU^\top)^\top \right)U  = 2 \nabla f_i(UU^\top)U.
\end{align*}
\end{small}
We now state key assumptions used in our main results.
\begin{assumption} 
\label{assu:res-str-cvx-smooth}
    The function $f_i$ is $\mu$-restricted strongly convex and $L$-restricted smooth. That is, $\forall X, Y \succeq 0$ and $\forall i \in [M]$, it holds that
    \begin{subequations}
    \begin{align}
        f_i(Y) \geq f_i(X) &+ \langle \nabla f_i(X), Y{-}X \rangle + \tfrac{\mu}{2} \|X-Y\|_F^2, \tag{I-a} \label{A1:eq:str-cvx} \\
        \text{and}\quad \| \nabla f_i(X) &- \nabla f_i(Y) \|_F \leq L \|X-Y\|_F. \tag{I-b} \label{A1:eq:smooth}
    \end{align}
    \end{subequations}
\end{assumption}


\begin{assumption} \label{assu:stoc-grad}
The stochastic gradient $\nabla g_i^j$ is unbiased, has a bounded variance, and is bounded {in expectation}, $\forall i \in [M].$
That is,
\begin{subequations}
\begin{align} 
    &\mathbb{E}_j \big[ \nabla g_i^j(U) \big] = \nabla g_i(U) \tag{II-a} \label{A2:eq:stoc-grad-unbiased}, \\
    &\mathbb{E}_j \big[ \| \nabla g_i^j(U) - \nabla g_i(U) \|_F^2 \big] \leq \sigma^2  \tag{II-b} \label{A2:eq:stoc-grad-var-bound}, \quad \text{and} \\
    &{\mathbb{E}_j \big[ \| \nabla g_i^j (U) \|_F^2 \big] \leq G^2}, \tag{II-c}  \label{A2:eq:stoc-grad-bounded}
\end{align}
\end{subequations}
where $j$ follows a uniform distribution.
\end{assumption}


Assumptions \eqref{A1:eq:str-cvx} and \eqref{A1:eq:smooth} respectively state that $\mu$-strong convexity and $L$-smoothness hold when we restrict the space of $d \times d$ matrices 
to the set of PSD matrices.
Such assumptions have become standard in optimization analysis, and are significantly weaker than assuming global strong convexity. Importantly, note that we only assume $f_i(X)$ to have such structures---the transformed function $g_i(U)$ in \eqref{eq:fact-obj} typically does not satisfy restricted strong convexity/smoothness \cite{zhou_accelerated_2020}.

Assumptions \eqref{A2:eq:stoc-grad-unbiased} and \eqref{A2:eq:stoc-grad-var-bound} respectively imply that the stochastic gradient is unbiased and has a bounded variance, and both are standard assumptions in stochastic optimization \cite{bottou2018optimization}. 
{
Assumption \eqref{A2:eq:stoc-grad-bounded} states that the stochastic gradient has a bounded norm, in expectation. This assumption may seem strong when the objective is (unconstrained) strongly convex \cite{nguyen2018sgd}; however, note that Assumption \eqref{A1:eq:str-cvx} is restricted to PSD matrices, and the original Problem \eqref{eq:str-cvx-obj} is constrained.
}



Apart from \eqref{eq:fact-obj} being non-convex, another difficulty that arises by the parametrization $X=UU^\top$ is that the solution can become non-unique.\footnote{Consider reconstructing $X^\star = \begin{bmatrix}
    1 & 1  \\
    1 & 1
\end{bmatrix}$. It can be seen that both $U^\star = [1~1]^\top$ and $\tilde{U}^\star = -[1~1]^\top$ satisfy $U^\star U^{\star \top} = \tilde{U}^\star \tilde{U}^{\star \top} = X^\star.$}
We can remove this ambiguity by defining the following rotation invariant distance metric.
\begin{definition}[Eq.~($3.1$) in~\cite{tu2016low}] 
\label{def:procrustes-dist}
For any $U, V \in \mathbb{R}^{d \times r},$ let
$D(U, V) := \min_{R \in \mathcal{O}} \| U - VR \|_F,$
where $\mathcal{O} \subseteq \mathbb{R}^{r \times r}$ is the set of orthonormal matrices such that $R^\top R = \mathbb{I}_{r \times r}.$ 
\end{definition}
%
%

\begin{remark}
Definition~\ref{def:procrustes-dist} regards all $U, V \in \mathbb{R}^{d \times r}$
to be in the same distance such that $D(U, VR) = D(UR, V) = D(U,V).$ Hence, it defines the equivalence classes $\{UR: R^\top R = \mathbb{I}_{r \times r}\}$ and $\{VR: R^\top R = \mathbb{I}_{r \times r}\}$ \cite{zhou_accelerated_2020}.  
\end{remark}


A crucial component for our convergence analysis is the following lemma, which replaces the role of (strong) convexity in classical convergence analysis of gradient descent: 
\begin{lemma}[Lemma 14 in \cite{bhojanapalli2016dropping}] 
\label{lem:descent-lemma}
    Let Assumption~\ref{assu:res-str-cvx-smooth} hold. 
    Assume that $D^2(U_0^i, U^\star) \leq \frac{\sigma_r (X^\star)}{100 \cdot \kappa \cdot \sigma_1 (X^\star)}$, where $\sigma_k(X^\star)$ is the $k$-th singular value of $X^\star$,\footnote{Without loss of generality, singular values are sorted in descending order.} and $\kappa := \frac{L}{\mu}.$ 
    Then, the following inequality holds:
    \begin{equation}
    \begin{split}
        &\big\langle U_t^i - U^\star R^\star, \nabla g_i(U_t^i) \big\rangle \\
        &\quad\quad \geq \tfrac{2\eta_t}{3} \| \nabla g_i(U_t^i) \|_F^2 + \tfrac{3 \mu}{20} \sigma_r(X^\star) \cdot D^2 (U_t^i, U^\star).
    \end{split}
    \end{equation}
\end{lemma}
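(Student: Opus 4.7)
The plan is to reduce the factor-space inner product on the left-hand side to quantities in the full matrix variable $X = UU^\top$, apply Assumption~\ref{assu:res-str-cvx-smooth} there, and then lift the resulting bound back to the factor space. Let $\Delta := U_t^i - U^\star R^\star$. Using $\nabla g_i(U) = 2\nabla f_i(UU^\top)U$ and the symmetry of $\nabla f_i$, the starting point is the algebraic identity
\begin{align*}
\langle U_t^i - U^\star R^\star, \nabla g_i(U_t^i) \rangle
&= \langle U_t^i(U_t^i)^\top - X^\star,\, \nabla f_i(U_t^i(U_t^i)^\top) \rangle \\
&\quad + \langle \Delta\Delta^\top,\, \nabla f_i(U_t^i(U_t^i)^\top) \rangle,
\end{align*}
which follows from $U\Delta^\top + \Delta U^\top = UU^\top - X^\star + \Delta\Delta^\top$.

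The first term is the ``signal''. By restricted strong convexity \eqref{A1:eq:str-cvx} applied at $X = U_t^i(U_t^i)^\top$ and $Y = X^\star$, together with $f_i(U_t^i(U_t^i)^\top) \geq f_i(X^\star)$, this term is at least $\tfrac{\mu}{2}\|U_t^i(U_t^i)^\top - X^\star\|_F^2$. To pass this bound to factor space, I would invoke the standard lifting inequality
\[
\| UU^\top - X^\star \|_F^2 \geq 2(\sqrt{2}-1)\,\sigma_r(X^\star)\, D^2(U, U^\star),
\]
which, after allocating a fraction of the $\|U_t^i(U_t^i)^\top - X^\star\|_F^2$ reserve, produces the desired $\tfrac{3\mu}{20}\sigma_r(X^\star)D^2(U_t^i, U^\star)$ contribution with constant to spare.

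The cross term $\langle \Delta\Delta^\top, \nabla f_i(U_t^i(U_t^i)^\top) \rangle$ is the main obstacle, since it can be negative. I would split it as $\langle \Delta\Delta^\top, \nabla f_i(U_t^i(U_t^i)^\top) - \nabla f_i(X^\star)\rangle + \langle \Delta\Delta^\top, \nabla f_i(X^\star)\rangle$, bound the first piece by $L\|U_t^i(U_t^i)^\top - X^\star\|_F \cdot \|\Delta\|_F^2$ via restricted smoothness \eqref{A1:eq:smooth} and Cauchy--Schwarz, and then absorb it into a small fraction of the main term using $\|\Delta\|_F^2 = D^2(U_t^i, U^\star) \leq \tfrac{\sigma_r(X^\star)}{100\kappa\sigma_1(X^\star)}$ from the neighborhood hypothesis. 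The second piece vanishes whenever $\nabla f_i(X^\star) = 0$, and otherwise is handled via first-order optimality on the PSD cone.

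Finally, to extract the $\tfrac{2\eta_t}{3}\|\nabla g_i(U_t^i)\|_F^2$ sink, I would chain $\|\nabla g_i(U)\|_F^2 \leq 4\|\nabla f_i(UU^\top)\|_F^2\|U\|_2^2 \leq 4L^2\|UU^\top - X^\star\|_F^2 \cdot \|U\|_2^2$, together with $\|U_t^i\|_2^2 \leq \tfrac{3}{2}\sigma_1(X^\star)$ implied by the initialization hypothesis. This lets me trade yet another fraction of the restricted-strong-convexity reserve for the gradient-squared term provided $\eta_t$ is small enough (on the order of $1/(L\sigma_1(X^\star))$), which is where the coefficient $\tfrac{2\eta_t}{3}$ arises. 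The main technical difficulty is this three-way bookkeeping---splitting the $\tfrac{\mu}{2}\|U_t^i(U_t^i)^\top - X^\star\|_F^2$ reserve among (i) the lifted $\sigma_r(X^\star)D^2$ term, (ii) cross-term absorption, and (iii) the gradient sink---while simultaneously maintaining by induction that $U_t^i$ stays inside the neighborhood assumed for $U_0^i$.
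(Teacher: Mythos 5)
The paper does not actually prove this lemma: it is imported verbatim as Lemma~14 of \cite{bhojanapalli2016dropping}, so the only meaningful comparison is against that source. Your plan is essentially the proof given there (and in \cite{tu2016low}): the identity $U\Delta^\top + \Delta U^\top = UU^\top - X^\star + \Delta\Delta^\top$, restricted strong convexity on the ``signal'' term, the lifting inequality $\|UU^\top - X^\star\|_F^2 \geq 2(\sqrt{2}-1)\,\sigma_r(X^\star)\,D^2(U,U^\star)$, and absorption of the cross term via restricted smoothness plus the neighborhood hypothesis. Architecturally this is the right route, and the budget does close: $\tfrac{\mu}{2}\cdot 2(\sqrt2-1) \approx 0.41\mu$ against the required $\tfrac{3\mu}{20} = 0.15\mu$ leaves a genuine reserve.

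Three caveats, of which the first is the one that actually changes the statement being proved. (i) Your gradient-sink step forces a restriction $\eta_t \lesssim \tfrac{1}{\kappa L \sigma_1(X^\star)}$ (note the extra $\kappa$: you need $\eta_t L^2 \sigma_1(X^\star) \lesssim \mu$, not $\lesssim L$). The lemma as stated in this paper imposes no condition on $\eta_t$ and is in fact false for large $\eta_t$; the original Lemma~14 avoids this by hard-coding $\eta$ as a specific quantity of order $1/\big(L\sigma_1(X^\star)+\|\nabla f(X^\star)\|_2\big)$. So you would be proving a corrected version of the statement---worth saying explicitly rather than leaving as ``small enough.'' (ii) Two of your steps silently assume $X^\star$ is a minimizer of each $f_i$ and, in the chain $\|\nabla f_i(UU^\top)\|_F \leq L\|UU^\top - X^\star\|_F$, that $\nabla f_i(X^\star)=0$; under the paper's homogeneity assumption and exact (realizable) measurements this is fine, but otherwise you must carry $\|\nabla f_i(X^\star)\|$ terms as the cited source does. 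For the piece $\langle \Delta\Delta^\top, \nabla f_i(X^\star)\rangle$, note it is simply nonnegative (since $\nabla f_i(X^\star)\succeq 0$ by PSD-cone optimality and $\Delta\Delta^\top \succeq 0$) and can be dropped, which is cleaner than ``handled via first-order optimality.'' (iii) The three-way constant bookkeeping is asserted, not executed; since that accounting is the entire technical content of the lemma, the proposal is a correct plan rather than a complete proof.
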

\begin{remark}
{
The initialization scheme \eqref{eq:initialization} in Algorithm~\ref{alg:local-sfgd} is modified from \cite[Theorem~11]{bhojanapalli2016dropping} to distributed version, and satisfies the initialization condition in Lemma~\ref{lem:descent-lemma} for small enough $\kappa$; for the QST problem, the Pauli sensing matrices $A_k$ satisfy the Restriced Isometry Property (RIP) \cite{liu2011universal, candes2006near}, implying $\kappa \approx \tfrac{1+\delta}{1-\delta}$, where $\delta \in (0, 1)$ is the RIP constant. 
Hence, by using right prior information (e.g., low-rankness), we can apply the compressed sensing results, implying that $X^\star$ is unique and can be recovered exactly \cite{recht2010guaranteed}.
}
\end{remark}

We are now ready to present the main theoretical results. We first show in Theorem~\ref{thm:lin-conv} that the Local SFGD 
converges locally at a linear rate to a small neighborhood of the global optimum with a constant step size. Then, in Theorem~\ref{thm:sub-lin-conv}, we show the exact local convergence by using an appropriately diminishing step size, at the expense of reducing the convergence rate to a sub-linear rate.


\begin{theorem}[Local linear convergence with constant step size]
\label{thm:lin-conv}
Let Assumptions~\ref{assu:res-str-cvx-smooth}, \ref{assu:stoc-grad},
and the initialization condition of Lemma \ref{lem:descent-lemma}
hold. 
Moreover, let $\eta_t = \eta < \tfrac{1}{\alpha}$ for $t \in [0:T]$ and $\max_p |t_p - t_{p+1}| \leq h$. Then, the output of Algorithm~\ref{alg:local-sfgd} has the following property:
{
\begin{equation} \label{eq:lin-conv}
\begin{split}
    \mathbb{E}\big[D^2(\hat{U}_{T+1}, U^\star)\big]
    &\leq 
    \left( 1 - \eta \alpha  \right)^{T+1} D^2(\hat{U}_0, U^\star) \\
    &\quad + \eta \left(\tfrac{(h-1)^2 G^2}{\alpha} +  \tfrac{\sigma^2}{M \alpha} \right),
\end{split}
\end{equation}
}
where $X^\star$ is the optimum of $f$ over the set of PSD matrices such that $\text{rank}(X^\star) = r$, $U^\star$ is such that $X^\star = U^\star U^{\star \top} $, and $\alpha = \tfrac{3 \mu}{10} \sigma_r(X^\star)$ is a global constant.
\end{theorem}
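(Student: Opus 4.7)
The plan is to analyze the virtual averaged iterate $\bar{U}_t := \tfrac{1}{M}\sum_{i=1}^M U_t^i$, which coincides with the algorithm's output at the final step since $\hat{U}_{T+1} = \bar{U}_{T+1}$. A key observation is that, by linearity of averaging, both branches of Algorithm~\ref{alg:local-sfgd} yield the same recursion for the average, namely $\bar{U}_{t+1} = \bar{U}_t - \eta \cdot \tfrac{1}{M}\sum_{i=1}^M \nabla g_i^{j_t}(U_t^i)$, so a single one-step inequality suffices for all $t$. I would let $R_t^\star$ achieve the minimum in Definition~\ref{def:procrustes-dist} for the pair $(\bar{U}_t, U^\star)$, so that $D^2(\bar{U}_t, U^\star) = \|\bar{U}_t - U^\star R_t^\star\|_F^2$, and use the bound $D^2(\bar{U}_{t+1}, U^\star) \leq \|\bar{U}_{t+1} - U^\star R_t^\star\|_F^2$ (the Procrustes minimizer at time $t+1$ only improves this).

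Expanding this squared norm produces the standard three-term decomposition: the previous squared distance, a cross term, and an $\eta^2$ norm-squared term. Conditioning on the history and taking expectation, Assumption~\eqref{A2:eq:stoc-grad-unbiased} replaces the stochastic gradients in the cross term by the true gradients $\nabla g_i(U_t^i)$, while Assumption~\eqref{A2:eq:stoc-grad-var-bound} together with independence of the samples $j_t$ across machines gives the variance reduction $\mathbb{E}\|\tfrac{1}{M}\sum_i \nabla g_i^{j_t}(U_t^i)\|_F^2 \leq \|\tfrac{1}{M}\sum_i \nabla g_i(U_t^i)\|_F^2 + \tfrac{\sigma^2}{M}$, which is the origin of the $\sigma^2/M$ term in the final bound.

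Next I would split the cross term machine-by-machine via $\langle \bar{U}_t - U^\star R_t^\star, \nabla g_i(U_t^i)\rangle = \langle U_t^i - U^\star R_t^\star, \nabla g_i(U_t^i)\rangle + \langle \bar{U}_t - U_t^i, \nabla g_i(U_t^i)\rangle$. For the first piece, Lemma~\ref{lem:descent-lemma} supplies both a contraction proportional to $\alpha \cdot D^2(U_t^i, U^\star)$ (matching the theorem's $\alpha = \tfrac{3\mu}{10}\sigma_r(X^\star)$ after absorbing the factor of $2$ from $\nabla g_i(U) = 2\nabla f_i(UU^\top)U$) and a $\|\nabla g_i(U_t^i)\|_F^2$ term that cancels the $\eta^2$ norm-squared piece provided $\eta \leq 1/\alpha$. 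I then switch the contraction from $D^2(U_t^i, U^\star)$ back to $D^2(\bar{U}_t, U^\star)$ modulo consensus error, and bound the remaining consensus inner product using Cauchy--Schwarz plus the gradient moment bound $G^2$ from Assumption~\eqref{A2:eq:stoc-grad-bounded}. The consensus error itself satisfies $\mathbb{E}\|\bar{U}_t - U_t^i\|_F^2 \leq (h-1)^2 \eta^2 G^2$, because $\bar{U}_{t_p} = U_{t_p}^i$ at the most recent synchronization time $t_p$ with $t - t_p \leq h-1$, and each subsequent local step contributes at most $\eta G$ in expectation to the drift.

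Assembling these pieces gives the one-step recursion $\mathbb{E}[D^2(\bar{U}_{t+1}, U^\star)] \leq (1-\eta\alpha)\,\mathbb{E}[D^2(\bar{U}_t, U^\star)] + \eta^2 (h-1)^2 G^2 + \tfrac{\eta^2 \sigma^2}{M}$, and unrolling over $t = 0,\ldots,T$ produces $(1-\eta\alpha)^{T+1} D^2(\hat U_0, U^\star)$ plus a geometric sum $\sum_{s=0}^{T}(1-\eta\alpha)^s \leq 1/(\eta\alpha)$ multiplied by the noise constants, which yields precisely the stated bound. The main obstacle I expect is the rotational ambiguity of the Procrustes distance: Lemma~\ref{lem:descent-lemma} is stated for the Procrustes minimizer of the \emph{local} iterate $U_t^i$, whereas my argument uses a single rotation $R_t^\star$ tailored to the \emph{average} $\bar{U}_t$. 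I would resolve this by showing that swapping these rotations introduces an error controlled by $\|\bar{U}_t - U_t^i\|_F$ (absorbable into the consensus term), or by invoking a variant of Lemma~\ref{lem:descent-lemma} valid for any $R \in \mathcal{O}$ at the cost of similar perturbations. Inductively maintaining the neighborhood condition of Lemma~\ref{lem:descent-lemma} for every $U_t^i$ throughout the run, so that the descent inequality remains available at each step, is the other delicate bookkeeping point.
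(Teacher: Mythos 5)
Your proposal follows essentially the same route as the paper: analyze the virtual averaged iterate, expand the squared Procrustes distance into a three-term decomposition, split the cross term machine-by-machine so that Lemma~\ref{lem:descent-lemma} handles $\langle U_t^i - U^\star R^\star, \nabla g_i(U_t^i)\rangle$ while the consensus drift $\mathbb{E}\|\hat U_t - U_t^i\|_F^2 \leq (h-1)^2\eta^2 G^2$ and the variance bound $\sigma^2/M$ absorb the rest, then unroll the resulting contraction geometrically. The proposal is correct (up to the same constant-factor bookkeeping the paper itself is loose about), and you additionally flag the rotation-matching subtlety between the Procrustes minimizer for $\hat U_t$ and the one Lemma~\ref{lem:descent-lemma} assumes for each $U_t^i$, a point the paper's proof silently elides by using a single $R^\star$ throughout.
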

\begin{remark}
{
In \eqref{eq:lin-conv}, the expectation is with respect to the previous iterates, $\{ \hat{U}_t \}_{t=0}^{T}$.
We make a few remarks about Theorem~\ref{thm:lin-conv}. First, notice the last variance term $\tfrac{\sigma^2}{M\alpha}$, which disappears in the noiseless case, is reduced by the number of machines $M$. Second, we assume a single-batch is used in the proof; by using batch size $b>1$, this term can be further divided by $b$. Lastly, by plugging in $h=1$ (i.e., synchronization happens on every iteration), the first variance term 
disappears, exhibiting similar local linear convergence to SFGD \cite{zeng_global_2019}.
}
\end{remark}


\begin{proof}
We start with the following auxiliary lemma. 
\begin{lemma} \label{lem:avg-local-diff}
{
Let Assumptions~\ref{assu:res-str-cvx-smooth} and 
\eqref{A2:eq:stoc-grad-bounded} hold}. Then, the output of Algorithm~\ref{alg:local-sfgd} with 
$\max_p |t_{p+1} - t_p| \leq h$ satisfies:
\begin{align} \label{eq:avg-local-diff}
    {\frac{1}{M} \sum_{i=1}^M \mathbb{E} \big[ \| \hat{U}_t - U_t^i \|_F^2 \big]  \leq (h-1)^2 \eta_t^2 G^2.}
\end{align}
\end{lemma}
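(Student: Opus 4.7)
The plan is to bound the average consensus error by unrolling the local SFGD updates from the most recent synchronization round, then using the variance-minimization property of the mean together with Jensen/Cauchy--Schwarz and Assumption~\eqref{A2:eq:stoc-grad-bounded}.

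More concretely, fix a time $t$ and let $t_p$ denote the largest synchronization time with $t_p < t$. By line~6 of Algorithm~\ref{alg:local-sfgd}, after the synchronization step at $t_p$ all local iterates coincide, i.e., $U_{t_p+1}^i = \hat{U}_{t_p+1}$ for every $i \in [M]$. For any $t$ with $t_p + 1 \leq t \leq t_{p+1}$, I would unroll the non-sync branch (line~8):
\begin{equation*}
    U_t^i \;=\; \hat{U}_{t_p+1} \;-\; \sum_{s=t_p+1}^{t-1} \eta_s \nabla g_i^{j_s}(U_s^i),
\end{equation*}
and note that the number of summands is at most $t_{p+1}-t_p-1 \leq h-1$ by the assumption on the synchronization gap.

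The next step uses the fact that the sample mean minimizes the sum of squared distances, namely $\hat U_t = \arg\min_V \sum_i \|U_t^i - V\|_F^2$. Choosing the common reference $V = \hat{U}_{t_p+1}$ yields
\begin{equation*}
    \frac{1}{M}\sum_{i=1}^M \|U_t^i - \hat{U}_t\|_F^2 \;\leq\; \frac{1}{M}\sum_{i=1}^M \|U_t^i - \hat{U}_{t_p+1}\|_F^2
    \;=\; \frac{1}{M}\sum_{i=1}^M \Bigl\| \sum_{s=t_p+1}^{t-1} \eta_s \nabla g_i^{j_s}(U_s^i) \Bigr\|_F^2.
\end{equation*}
Applying Cauchy--Schwarz (or equivalently Jensen's inequality) to the inner sum of at most $h-1$ terms, then taking expectation and invoking Assumption~\eqref{A2:eq:stoc-grad-bounded} which gives $\mathbb{E}\|\nabla g_i^{j_s}(U_s^i)\|_F^2 \leq G^2$, produces a bound of the form $(h-1)\sum_{s=t_p+1}^{t-1}\eta_s^2 G^2$. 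Under the constant step-size regime of Theorem~\ref{thm:lin-conv} where $\eta_s \equiv \eta_t$, this collapses to $(h-1)^2 \eta_t^2 G^2$, which is exactly the desired bound.

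The main subtlety, rather than the main obstacle, is handling the step-size dependence: the statement of the lemma uses a single $\eta_t$, which is clean only when step sizes are constant inside each local window (the setting of Theorem~\ref{thm:lin-conv}); for the diminishing step-size case used later in Theorem~\ref{thm:sub-lin-conv}, one must separately argue that $\eta_s \leq \eta_{t_p+1}$ for $s$ in the window and absorb this into $\eta_t$. A secondary bookkeeping point is being careful about edge cases: when $t = t_p+1$ (immediately after a sync) the summation is empty and the inequality holds trivially, and when $h = 1$ the right-hand side is zero, consistent with fully synchronous SFGD. Neither the unrolling nor the variance step involve the strong convexity/smoothness of $f_i$, so Assumption~\ref{assu:res-str-cvx-smooth} is not actually needed for this lemma; only Assumption~\eqref{A2:eq:stoc-grad-bounded} is used.
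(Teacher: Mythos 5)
Your proof is correct and is essentially the standard consensus-error argument of Stich's Lemma 3.3, which is exactly what the paper invokes when it omits the proof: unroll from the last synchronization point, use the variance-minimization property of the mean, apply Cauchy--Schwarz over at most $h-1$ local steps, and bound via Assumption~\eqref{A2:eq:stoc-grad-bounded}. Your side remarks (that Assumption~\ref{assu:res-str-cvx-smooth} is not actually needed, and that the constant-step-size form is what makes the stated bound clean) are accurate but do not change the fact that this is the same route the paper takes.
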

The proof is almost identical to \cite[Lemma 3.3]{stich_local_2019}, and hence is omitted.
{Throughout the proof,
we use the notations:}
\begin{align*}
    U_{t+1}^i = U_t^i - \eta_t g_t^i   \quad \text{and} \quad  \hat{U}_{t+1} = \hat{U}_t - \eta_t g_t,
\end{align*}
where $\hat{U_t} = \frac{1}{M} \sum_{i=1}^M U_t^i$, i.e., the average across different machines at time $t$. We denote the stochastic gradient of machine $i$ at time $t$ with $g_t^i := \nabla f_i^{j_t} (U_t^i U_t^{i \top}) U_t^i = \nabla g_i^{j_t}(U_t^i)$, and the average of stochastic gradients across machines with $g_t = \frac{1}{M} \sum_{i=1}^M g_t^i$. Finally, we denote $\ex{g_t} = \bar{g_t}.$

We first decompose the distance of $D^2(\hat{U}_{t+1}, U^\star)$:
\begin{align}
    D^2(\hat{U}_{t+1}, U^\star) &= \min_{R \in \mathcal{O}} \|\hat{U}_{t+1} {-} U^\star R\|_F^2  
    \leq \| \hat{U}_{t+1} {-} U^\star R^\star\|_F^2 \nonumber \\
    &= \|\hat{U}_t - U^\star R^\star - \eta_t \bar{g}_t\|_F^2 + \eta_t^2 \|\bar{g}_t - g_t \|_F^2 \nonumber \\
    &\quad+2\eta_t \langle \hat{U}_t - U^\star R^\star - \eta_t \bar{g}_t, \bar{g}_t - g_t \rangle.
    \label{eq:main-rec}
\end{align}

The first term in \eqref{eq:main-rec} can be further decomposed to:
\begin{align} 
     \|\hat{U}_t - U^\star R^\star \|_F^2  + \eta_t^2 \|\bar{g}_t\|_F^2
    - 2\eta_t \langle \hat{U}_t - U^\star R^\star, \bar{g}_t \rangle. \label{eq:main-rec-first}
\end{align}
We bound the second and the third terms of \eqref{eq:main-rec-first} separately. For the second term, by Jensen's inequality, we have:
\begin{align}
    \|\bar{g}_t\|_F^2 &= \big\| \tfrac{1}{M} \sum_{i=1}^M \nabla g_i(U_t^i) \big\|_F^2
    \leq \tfrac{1}{M} \sum_{i=1}^M \|\nabla g_i(U_t^i)\|_F^2.
    \label{eq:grad-norm-bound}
\end{align}
%
For the third term, we decompose further to have:
\begin{align}
    \big\langle \hat{U}_t {-} U^\star R^\star, \tfrac{1}{M} \sum_{i=1}^M \nabla g_i(U_t^i) \big\rangle &= \tfrac{1}{M} \sum_{i=1}^M \big\langle \hat{U}_t {-} U_t^i , \nabla g_i(U_t^i) \big\rangle  \nonumber\\
    &\hspace{-1.8cm}+ \tfrac{1}{M}\sum_{i=1}^M \big\langle U_t^i - U^\star R^\star, \nabla g_i(U_t^i) \big\rangle.
    \label{eq:descent-term}
\end{align}
We again bound the two terms in \eqref{eq:descent-term} separately. Using $\langle A, B \rangle \geq - \tfrac{\delta}{2} \|A\|_F^2 - \tfrac{1}{2\delta} \|B\|_F^2,$ The first term admits:
\begin{align*}
    &\tfrac{1}{M} \sum_{i=1}^M \big\langle \hat{U}_t - U_t^i , \nabla g_i(U_t^i) \big\rangle \\
    &{\stackrel{}{\geq} \tfrac{1}{M} \sum_{i=1}^M \big(  - \tfrac{\delta}{2} \| \hat{U}_t - U_t^i \|_F^2 - \tfrac{1}{2\delta} \| \nabla g_i(U_t^i) \|_F^2 \big)}
\end{align*}
By Lemma~\ref{lem:descent-lemma}, the second term in \eqref{eq:descent-term} admits:
\begin{align*}
    &\tfrac{1}{M}\sum_{i=1}^M \big\langle U_t^i - U^\star R^\star, \nabla g_i(U_t^i) \big\rangle \\
    & \geq \tfrac{1}{M}\sum_{i=1}^M  \tfrac{2\eta_t}{3} \| \nabla g_i(U_t^i) \|_F^2 + \tfrac{1}{M}\sum_{i=1}^M \tfrac{3\mu\cdot \sigma_r(X^\star)}{20} D^2 (U_t^i, U^\star) \\
    & \geq \tfrac{1}{M}\sum_{i=1}^M  \tfrac{2\eta_t}{3} \| \nabla g_i(U_t^i) \|_F^2 +  \tfrac{3\mu\cdot \sigma_r(X^\star)}{20} D^2 (\hat{U}_t, U^\star), 
\end{align*}
where we used convexity of $D^2(\cdot, \cdot)$ in the second inequality.

Combining above two bounds into \eqref{eq:descent-term}, we have \vspace{-2mm}
\begin{align}
    &\langle \bar{g}_t, \hat{U}_t - U^\star R^\star \rangle  
    \nonumber\\
    & \geq \tfrac{1}{M} \sum_{i=1}^M \big( - \tfrac{\delta}{2} \| \hat{U}_t - U_t^i \|_F^2 - \tfrac{1}{2\delta} \| \nabla g_i(U_t^i) \|_F^2 \big) \nonumber \\
    & + \tfrac{1}{M}\sum_{i=1}^M  \tfrac{2\eta_t}{3} \| \nabla g_i(U_t^i) \|_F^2  +  \tfrac{3\mu\cdot \sigma_r(X^\star)}{20}  D^2 (\hat{U}_t, U^\star).
     \label{eq:descent-term-bound}
\end{align}
%
Substituting \eqref{eq:grad-norm-bound} and \eqref{eq:descent-term-bound} into \eqref{eq:main-rec-first}, we have
\vspace{-2mm}
\begin{align}
    &\|\hat{U}_t - U^\star R^\star - \eta_t \bar{g}_t\|_F^2 \nonumber \\
    &\quad\leq \left( 1 - \eta_t \cdot \tfrac{3 \mu}{10} \sigma_r(X^\star) \right) \|\hat{U}_t - U^\star R^\star \|_F^2 \nonumber \\
    &\quad\quad+ \tfrac{1}{M} \sum_{i=1}^M \big[ \big( \tfrac{\eta_t}{\delta} - \tfrac{\eta_t^2}{3} \big) \| \nabla g_i(U_t^i) \|_F^2 + \eta_t \delta \| \hat{U}_t - U_t^i\|_F^2 \big] \nonumber \\
    &\hspace{1mm}{\stackrel{\delta = 4/\eta_t}{=} \left( 1 - \eta_t \alpha\right) \|\hat{U}_t {-} U^\star R^\star \|_F^2} \nonumber \\ 
    &\quad\quad {+ \tfrac{1}{M} \sum_{i=1}^M \big[ \eta_t \big( \tfrac{\eta_t}{4} - \tfrac{\eta_t}{3} \big) \| \nabla g_i(U_t^i) \|_F^2 + 4 \| \hat{U}_t - U_t^i\|_F^2 \big] } \nonumber \\
    &\quad{\stackrel{}{\leq} \left( 1 - \eta_t \alpha\right) \|\hat{U}_t {-} U^\star R^\star \|_F^2 + \tfrac{4}{M} \sum_{i=1}^M \| \hat{U}_t - U_t^i\|_F^2,}
     \label{eq:main-rec-first-term-bound}
\end{align}
{
where in the equality we defined $\alpha := \tfrac{3 \mu}{10} \sigma_r(X^\star),$ and in the last inequality we used that $\tfrac{\eta_t}{4} - \tfrac{\eta_t}{3} < 0$.} 

Substituting \eqref{eq:main-rec-first-term-bound} into \eqref{eq:main-rec} and taking expectations conditional on previous iterates, 
and using $\ex{g_t} = \bar{g}_t$, 
we get
\begin{align}
    &\mathbb{E} [D^2(\hat{U}_{t+1}, U^\star)]
    = \|\hat{U}_t {-} U^\star R^\star {-} \eta_t \bar{g}_t\|_F^2 + \eta_t^2 \mathbb{E}[\|\bar{g}_t {-} g_t \|_F^2] \nonumber \\
    &{\hspace{0cm}\stackrel{\eqref{eq:main-rec-first-term-bound}}{\leq} \left( 1 - \eta_t \alpha \right) \|\hat{U}_t - U^\star R^\star \|_F^2} \nonumber \\
    &{\quad\quad+ \tfrac{4}{M} \sum_{i=1}^M \mathbb{E} \big[ \| \hat{U}_t - U_t^i \|_F^2 \big] + \eta_t^2 \mathbb{E}[\|\bar{g}_t - g_t \|_F^2] } \nonumber \\
    &{\hspace{0cm}\stackrel{\eqref{eq:avg-local-diff}}{\leq} \left( 1 - \eta_t \alpha \right) \|\hat{U}_t - U^\star R^\star \|_F^2 } \nonumber \\ 
    &{\quad\quad+ 4 \eta_t^2 (h-1)^2 G^2 + \eta_t^2 \mathbb{E}[\|\bar{g}_t - g_t \|_F^2].} \nonumber
\end{align}
where 
the last inequality is by Lemma~\ref{lem:avg-local-diff}.
%
We further have:
\vspace{-3mm}
\begin{align*}
    \mathbb{E}[\|\bar{g}_t - g_t \|_F^2] &= \mathbb{E} \Big[ \Big\| \tfrac{1}{M}\sum_{i=1}^M \big( \nabla g_i^{j_t} (U_t^i) - \nabla g_i (U_t^i) \big) \Big\|_F^2 \Big]  \\
    &\hspace{-1.7cm}\stackrel{}{\leq} \tfrac{1}{M^2}\sum_{i=1}^M \mathbb{E}[\| \nabla g_i^{j_t} (U_t^i) - \nabla g_i(U_t^i) \|_F^2] \stackrel{\eqref{A2:eq:stoc-grad-var-bound}}{\leq} \tfrac{\sigma^2}{M},
\end{align*}
where we used $\text{Var} (\sum_{m=1}^M X_m) = \sum_{m=1}^M \text{Var}(X_m)$ for independent random variables. 

We now arrive at the iteration invariant bound:
{
\begin{align}
    &\mathbb{E}[D^2(\hat{U}_{t+1}, U^\star)] \nonumber \\
    &\hspace{0cm}\leq \left( 1 - \eta_t \alpha  \right) D^2(\hat{U}_t, U^\star) + \eta_t^2 \big( 4(h-1)^2 G^2  + \tfrac{\sigma^2}{M} \big). \label{eq:iteration-invariant}
\end{align}
}
%
Lastly, we unfold \eqref{eq:iteration-invariant} for $T$ iterations, and using \newline $\sum_{t=0}^T (1-\eta_t \alpha)^t \leq \sum_{t=0}^\infty (1-\eta_t \alpha)^t = \frac{1}{\eta_t \alpha},$ we obtain
{
\begin{align}
    \mathbb{E}[D^2(\hat{U}_{T+1}, U^\star)] &\leq 
    \left( 1 - \eta_t \alpha  \right)^{T+1} D^2(\hat{U}_0, U^\star) \nonumber \\
    &\quad+ \eta_t \left( \tfrac{4(h-1)^2 G^2}{\alpha} +  \tfrac{\sigma^2}{M \alpha} \right), \nonumber
\end{align}
}
which completes the proof.
\end{proof}



\begin{theorem}[Local sub-linear convergence with diminishing step size]
\label{thm:sub-lin-conv}
Let Assumptions~\ref{assu:res-str-cvx-smooth}, \ref{assu:stoc-grad},
and the initialization condition of Lemma \ref{lem:descent-lemma}
hold. Moreover, let $\eta_t = \frac{2}{\alpha(t+2)}$ for $t \in [0:T]$ and $\max_p |t_p - t_{p+1}| \leq h$. Then, the output of Algorithm~\ref{alg:local-sfgd} has the following property:
\begin{equation}
\begin{split}
    \mathbb{E}\big[D^2(\hat{U}_{T+1}, U^\star)\big]
    &\leq \tfrac{4C}{\alpha (T+3)},
\end{split}
\end{equation}
where $X^\star$ is the optimum such that $\text{rank}(X^\star) = r$, $U^\star$ is such that $X^\star = U^\star U^{\star \top}$, and $\alpha = \tfrac{3 \mu}{10} \sigma_r(X^\star)$ and {$C = 4(h-1)^2G^2  + \tfrac{\sigma^2}{M}$} are global constants.
\end{theorem}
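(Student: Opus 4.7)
The plan is to reuse the per-iteration recursion already established in the proof of Theorem~\ref{thm:lin-conv} and then close the argument by induction on $t$. Specifically, inequality \eqref{eq:iteration-invariant} was derived without assuming a constant step size: it only used Assumption~\ref{assu:res-str-cvx-smooth}, Assumption~\ref{assu:stoc-grad}, Lemma~\ref{lem:descent-lemma}, Lemma~\ref{lem:avg-local-diff}, and the rule $\delta = 4/\eta_t$. Consequently, for any positive step-size sequence $\{\eta_t\}$ we still have
\begin{equation*}
    a_{t+1} \leq (1 - \eta_t \alpha)\, a_t + \eta_t^2 C,
    \qquad
    a_t := \mathbb{E}\bigl[D^2(\hat{U}_t, U^\star)\bigr],
\end{equation*}
with $C = 4(h-1)^2 G^2 + \sigma^2/M$ and $\alpha = \tfrac{3\mu}{10}\sigma_r(X^\star)$.

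Next I would substitute the prescribed diminishing step size $\eta_t = \tfrac{2}{\alpha(t+2)}$, which gives the clean contraction factor $1 - \eta_t \alpha = \tfrac{t}{t+2}$ and the noise term $\eta_t^2 C = \tfrac{4C}{\alpha^2 (t+2)^2}$, yielding
\begin{equation*}
    a_{t+1} \;\leq\; \frac{t}{t+2}\, a_t \;+\; \frac{4C}{\alpha^2 (t+2)^2}.
\end{equation*}
From here I would prove by induction on $t$ the scaled bound $a_t \leq \tfrac{4C}{\alpha(t+2)}$ (matching the claimed form with $T\mapsto t-1$). Assuming the hypothesis at step $t$, the recursion gives $a_{t+1} \leq \tfrac{4C}{\alpha(t+2)^2}\bigl[\tfrac{t}{1} + \tfrac{1}{\alpha}\bigr]$; after cross-multiplying with $(t+3)/(t+2)^2$, the sufficient condition reduces to an elementary inequality of the form $(t+3)(t \cdot K + K') \leq K(t+2)^2$, which holds whenever $K$ dominates $K'$ up to the harmless factor $(t+3)/(t+4)\leq 1$. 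This is the familiar ``$\tfrac{2}{\alpha(t+2)}$ trick'' for strongly convex SGD, and it closes the induction for all $t \geq 0$.

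For the base case, I would invoke the initialization condition of Lemma~\ref{lem:descent-lemma}, which guarantees $D^2(\hat{U}_0, U^\star) \leq \tfrac{\sigma_r(X^\star)}{100\kappa\sigma_1(X^\star)}$; this is a small constant that, together with the definition of $\alpha$, can be dominated by $\tfrac{4C}{2\alpha}$ whenever the stochastic noise term $\sigma^2/M$ or the drift term $(h-1)^2 G^2$ is nonzero (and if $C=0$ the rate is trivial). Evaluating the bound at $t = T+1$ then yields
\begin{equation*}
    \mathbb{E}\bigl[D^2(\hat{U}_{T+1}, U^\star)\bigr] \;\leq\; \frac{4C}{\alpha(T+3)},
\end{equation*}
as claimed.

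The main obstacle is bookkeeping rather than any genuinely new idea: one must be careful that the recursion from Theorem~\ref{thm:lin-conv} is indeed step-size-agnostic (the choice $\delta = 4/\eta_t$ must still lead to a non-positive coefficient on $\|\nabla g_i(U_t^i)\|_F^2$, which it does), and that the base-case constant absorbs the initialization in a way compatible with the inductive hypothesis. A secondary subtlety is that Lemma~\ref{lem:avg-local-diff} is applied with $\eta_t$ varying across iterations, so the bound $(h-1)^2 \eta_t^2 G^2$ should be interpreted with the current $\eta_t$, which is exactly what the derivation above requires.
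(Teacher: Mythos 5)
Your overall strategy is the same as the paper's: reuse the iteration-invariant recursion \eqref{eq:iteration-invariant} (which is indeed step-size agnostic, as you correctly verify), substitute $\eta_t = \tfrac{2}{\alpha(t+2)}$, and close by induction. However, there are two concrete problems in your execution.

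First, your induction hypothesis $a_t \leq \tfrac{4C}{\alpha(t+2)}$ carries the wrong power of $\alpha$ and does not close. With that hypothesis the recursion gives $a_{t+1} \leq \tfrac{t}{t+2}\cdot\tfrac{4C}{\alpha(t+2)} + \tfrac{4C}{\alpha^2(t+2)^2}$, and the target $\tfrac{4C}{\alpha(t+3)}$ requires $(t+3)\bigl(t + \tfrac{1}{\alpha}\bigr) \leq (t+2)^2$, i.e.\ $\tfrac{t+3}{\alpha} \leq t+4$. This fails whenever $\alpha = \tfrac{3\mu}{10}\sigma_r(X^\star) < \tfrac{t+3}{t+4}$, which is the typical regime (there is no reason for $\alpha \geq 1$). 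The inhomogeneity you wave away as ``$K$ dominates $K'$ up to a harmless factor'' is exactly the broken step. The paper instead inducts on $a_t \leq \tfrac{4C}{\alpha^2(t+2)}$, for which the two terms scale identically and the step reduces to the clean inequality $(t+1)(t+3) \leq (t+2)^2$; the $\alpha$ (rather than $\alpha^2$) in the theorem's displayed bound appears to be a typo relative to what the proof actually establishes.

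Second, your base case is not justified. You propose to dominate $D^2(\hat{U}_0, U^\star) \leq \tfrac{\sigma_r(X^\star)}{100\kappa\sigma_1(X^\star)}$ by a multiple of $C/\alpha$, but $C = 4(h-1)^2G^2 + \sigma^2/M$ can be arbitrarily small and nonzero (e.g.\ $h=1$ and small $\sigma^2$), while the initialization radius is a fixed constant, so this comparison can fail. The paper avoids the issue entirely: at $t=0$ the step size is $\eta_0 = 1/\alpha$, so the contraction factor $1-\eta_0\alpha$ is exactly zero, the initial distance is annihilated, and the base case is established at $t=1$ as $a_1 \leq \eta_0^2 C = C/\alpha^2 \leq \tfrac{4C}{3\alpha^2}$ with no assumption on $a_0$ beyond the initialization condition needed for Lemma~\ref{lem:descent-lemma} itself. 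You should adopt both fixes; with them your argument coincides with the paper's.
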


\begin{proof}

We claim the following, and prove by induction:
\begin{align}\label{eq:induction-hypothesis}
    D^2 (\hat{U}_t, U^\star) \leq \tfrac{4C}{\alpha^2 (t+2)}, \quad \text{with} \quad \eta_t = \tfrac{2}{\alpha(t+2)}.
\end{align}
We start from the iteration invariant bound in \eqref{eq:iteration-invariant}:
\begin{align*}
    \mathbb{E}\big[D^2(\hat{U}_{t+1}, U^\star)\big]
    &\leq  \left( 1 - \eta_t \alpha  \right) D^2(\hat{U}_t, U^\star) + \eta_t^2 \cdot C.
\end{align*}
%
For the base case $t=0$, we have 
\begin{align*}
   & \mathbb{E}\big[D^2(\hat{U}_1, U^\star)\big] \leq (1-\eta_0 \alpha) D^2 (\hat{U}_0, U^\star) + \eta_0^2 \cdot C \\
& = \left( 1 - \tfrac{1}{\alpha} \cdot \alpha \right) D^2 (\hat{U}_0, U^\star) + \tfrac{C}{\alpha^2}= \tfrac{C}{\alpha^2} \leq \tfrac{4C}{3\alpha^2}.
\end{align*}
Now, we proceed to the inductive step. Assuming \eqref{eq:induction-hypothesis} holds for time step $t$, we want to prove the same holds for the time step $t+1$.
Starting from \eqref{eq:iteration-invariant} again, we have:
\begin{align*}
    &\mathbb{E}\big[D^2(\hat{U}_{t+1}, U^\star)\big]
    \leq  \left( 1 - \eta_t \alpha  \right) D^2(\hat{U}_t, U^\star) + \eta_t^2 \cdot C \\
    &\stackrel{\eqref{eq:induction-hypothesis}}{\leq} \left( 1 - \tfrac{2}{t+2} \right) \cdot \tfrac{4C}{\alpha^2 (t+2)} + \tfrac{4C}{\alpha^2 (t+2)^2} 
    = 4C \cdot \tfrac{t+1}{\alpha^2 (t+2)} \cdot \tfrac{1}{t+2} \\
    &\leq 4C \cdot \tfrac{t+2}{\alpha^2 (t+3)} \cdot \tfrac{1}{t+2} = \tfrac{4C}{\alpha^2(t+3)},
\end{align*}
where 
in the last inequality we used the fact that $\frac{t+1}{t+2} \leq \frac{t+2}{t+3}.$ This completes the proof.
\end{proof}

\section{Numerical Results}
\label{sec:experiments}


We use Local SFGD to reconstruct the Greenberger-Horne-Zeilinger (GHZ) state, using simulated measurement data from Qiskit.
GHZ state is known as \emph{maximally entangled} quantum state \cite{zhao2021creation}, meaning it exhibits the maximal inter-particle correlation, which does not exist in classical mechanics. 
%
%
We are interested in: $(i)$ how the number of local steps affect the accuracy defined as $\varepsilon = \|\hat{U}_t \hat{U}_t^\top - \rho^\star_{\text{ghz}}\|_F^2$, where $\rho^\star_{\text{ghz}} = U^\star U^{\star \dagger}$ is the true density matrix for the GHZ state;
and $(ii)$ the scalability of the distributed setup for various number of classical machines $M$. 

In Fig.~\ref{fig:local-sfgd}~(Top), we first fix the number of machines $M = 10$ and the number of total synchronization steps to be $100$, and vary the number of local iterations between two synchronization steps, i.e., $h \in \{1, 10, 25, 50, 100, 200\}.$ {We use constant step size $\eta=1$ for all $h$.}
Increasing $h$, i.e., each distributed machine performing more local iterations, leads to faster convergence in terms of the synchronization steps. 
{Notably, the speed up gets marginal: e.g., there is not much difference between $h=100$ and $h=200$, indicating there is an ``optimal'' $h$ that leads to the biggest reduction in the number of synchronization steps. Further, one can notice that higher $h$ leads to slightly worse final accuracy---this is consistent with \eqref{eq:lin-conv} in Theorem~\ref{thm:lin-conv}, where the first variance term that depends on $G^2$ disappears with $h=1$. Finally, note that $\varepsilon$ does not decrease below certain level due to the inherent finite sampling error of quantum measurements~\cite{crawford2021efficient}.}

In Fig.~\ref{fig:local-sfgd}~{(Bottom)}, we plot the number of synchronization steps to reach $\varepsilon \leq 0.05,$ while fixing 
$h=20$. We vary the number of workers $M \in \{5, 10, 15, 20\}$, where each machine gets $200$ measurements. There is a significant speed up from $M=5$ to $M=15$, while for $M=20,$ it took one more syncrhonization step compared to $M=15,$ which is likely due to the stochasticity of SFGD within each machine.



\begin{figure}[tb!] 
    \centering
    \includegraphics[width=0.85\linewidth]{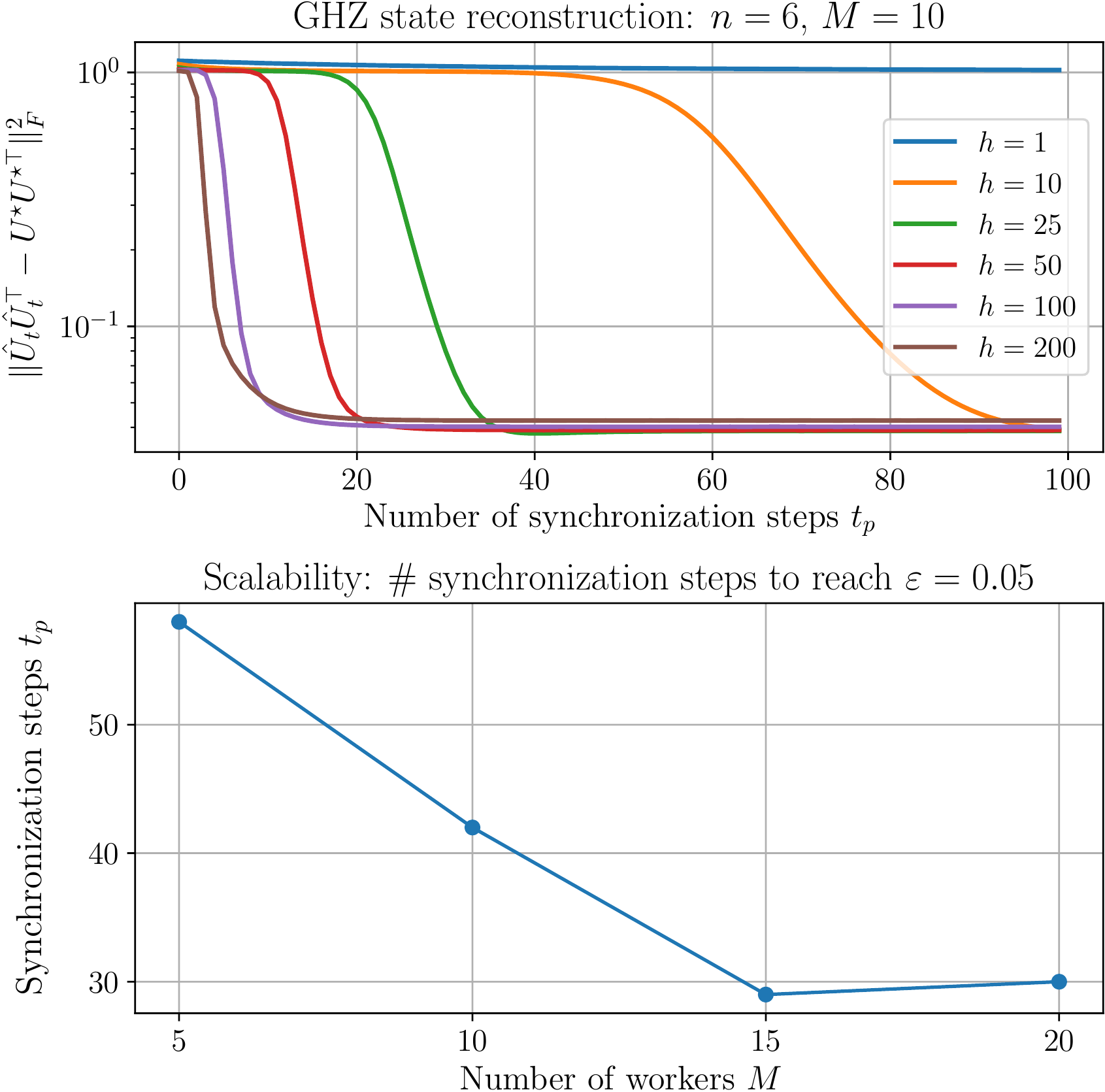}
    \caption{Top: Convergence speed as a function of number of synchronization steps $t_p$ for various number of local iterations. {Bottom}: number of synchronization steps to reach $\varepsilon \leq 0.05$ as a function of number of workers $M$. The batch size $b = 50$ is used for all cases.}
    \label{fig:local-sfgd}
\end{figure}

\section{Conclusion and Future Work}
In this work, we introduced a distributed problem set up for QST as an instance of a general distributed  {optimization problem with PSD/low-rank constraints}. We proposed the Local SFGD, a distributed non-convex algorithm that utilizes local 
steps at each distributed worker to estimate the low-rank factor of a density matrix. We proved the local convergence of Local SFGD for restricted strongly convex/smooth objectives, which can be of independent interest. For future work, extension to the heterogeneous data case as well as the decentralized case with various topologies can be investigated.  






\bibliographystyle{IEEEbib}
\bibliography{references}

\end{document}